\DeclareSymbolFont{symbolsC}{U}{pxsyc}{m}{n}
\DeclareMathSymbol{\medcirc}{\mathbin}{symbolsC}{7}
\algnewcommand\algorithmicinput{\textbf{Input:}}
\algnewcommand\Input{\item[\algorithmicinput]}
\algnewcommand\algorithmicoutput{\textbf{Output:}}
\algnewcommand\Output{\item[\algorithmicoutput]}
\algnewcommand{\Initialize}[1]{%
	\State \textbf{Initialize:}
	\Statex \hspace*{\algorithmicindent}\parbox[t]{.8\linewidth}{\raggedright #1}
}
\newtheorem{theorem}{Theorem}[section]
\newtheorem{lemma}[theorem]{Lemma}
\newtheorem{problem}[theorem]{Problem}
\newtheorem{proposition}[theorem]{Proposition}
\newtheorem{remark}[theorem]{Remark}
\newtheorem{assumption}[theorem]{Assumption}
\newcommand{\R}{{\mathbb{R}}}
\newcommand{\ra}{\rightarrow}
\title{\LARGE \bf
Control Barrier Functions for Unknown Nonlinear Systems\\ using Gaussian Processes*
}
\author{Pushpak Jagtap$^{1}$, George J. Pappas$^{2}$, and Majid Zamani$^{3,4}$
\thanks{*This work was supported in part by the H2020 ERC Starting Grant AutoCPS (grant agreement number 804639), NSF grant CNS-2039062, AFOSR grant FA9550-19-1- 0265 (Assured Autonomy in Contested Environments), the German Research Foundation (DFG) through the grant ZA 873/1-1, and the TUM International Graduate School of Science and Engineering (IGSSE). }
\thanks{$^{1}$P. Jagtap is with the Electrical and Computer Engineering Department, Technical University of Munich, Germany. {\tt\small pushpak.jagtap@tum.de}}%
\thanks{$^{2}$G. J. Pappas is with the Department of Electrical and Systems Engineering, University of Pennsylvania, USA. {\tt\small pappasg@seas.upenn.edu}}%
\thanks{$^{3}$M. Zamani is with the Computer Science Department, University of Colorado Boulder, USA. {\tt\small majid.zamani@colorado.edu}}
\thanks{$^{4}$M. Zamani is with the Computer Science Department, LMU Munich, Germany.}
}
\begin{document}

\maketitle
\thispagestyle{empty}
\pagestyle{empty}

\begin{abstract}

 This paper focuses on the controller synthesis for unknown, nonlinear systems while ensuring safety constraints. Our approach consists of two steps, a learning step that uses Gaussian processes and a controller synthesis step that is based on control barrier functions. In the learning step, we use a data-driven approach utilizing Gaussian processes to learn the unknown control affine nonlinear dynamics together with a statistical bound on the accuracy of the learned model. In the second controller synthesis steps, we develop a systematic approach to compute control barrier functions that explicitly take into consideration the uncertainty of the learned model. The control barrier function not only results in a safe controller by construction but also provides a rigorous lower bound on the probability of satisfaction of the safety specification. Finally, we illustrate the effectiveness of the proposed results by synthesizing a safety controller for a jet engine example.

\end{abstract}

\section{INTRODUCTION}

The synthesis of controllers enforcing safety in safety-critical applications has gained significant attention in the past few years.  The conventional techniques for synthesizing such controllers require a precise mathematical model of the system  \cite{Tabuada.2009,belta2017formal}.  However, there are many control applications where the precise model description can not be derived or is not available analytically. In such cases, sensor or simulation data coupled with data-driven approaches from machine learning can be used to identify unmodeled dynamics with high precision and complement the mathematical analysis from control theory.

Recently, Gaussian processes (GPs) have emerged as a data-driven approach providing a non-parametric probabilistic modeling framework, to design controllers for unknown dynamical systems  \cite{kocijan2016modelling}.
The results include utilizing GPs for providing model predictive control scheme  \cite{kocijan2004gaussian,koller2018learning,hewing2019cautious}, adaptive control  \cite{chowdhary2014bayesian}, tracking control of Euler-Lagrange systems  \cite{beckers2019stable}, backstepping control  \cite{capone2019backstepping}, control Lyapunov approaches  \cite{8368325}, feedback linearization schemes  \cite{umlauft2017feedback}, safe optimization of controller \cite{berkenkamp2016safe}, and policy through reinforcement learning \cite{akametalu2014reachability}  for partially or fully unknown dynamics. 
However, all these existing works mainly focus on conventional stability or tracking objectives and not safety or complex logic specifications.

On the other hand, approaches based on control barrier functions have shown a great potential in solving controller synthesis problems ensuring safety (see \cite[and references therein]{ames2016control,ames2019control}) and more complex logic specifications  \cite{jagtap2019formal,yang2019continuous,li2019temporal}. Unfortunately, there are very few results available in the literature utilizing notions of control barrier functions for unknown dynamical systems. Assuming a prior knowledge of control barrier functions, the results in \cite{wang2018safe} and \cite{cheng2019end} provide the safe online learning of the Gaussian process model and safe learning of the reinforcement learning policy, respectively. However, these results lack a formal guarantee on the obtained control strategies and are not suitable for the system with bounded control input set. Motivated by the above results and their limitations, this work proposes the synthesis of a controller for unknown nonlinear control systems ensuring safety specifications with some probabilistic guarantee by utilizing both Gaussian processes and control barrier functions.  

To the best of our knowledge, this paper is the first to combine notions of control barrier functions and Gaussian process models to synthesize controllers enforcing safety over unknown nonlinear control affine systems. 
In order to solve this controller synthesis problem, first, we learn the Gaussian process model from the noisy measurements along with the probabilistic guarantee on the model accuracy. 
Second, we provide a systematic approach to solve safety synthesis problem by computing corresponding control barrier functions and associated controllers while considering learned GP along with its corresponding confidence. 
Finally, we demonstrate the effectiveness of the proposed results on a jet-engine example.

The remainder of this paper is structured as follows. In Section~\ref{prelim}, we introduce a class of systems and assumptions required in this work. Then, we formally define the problem considered in this paper. Section \ref{gpm} discusses the Gaussian process model and the probabilistic guaranty on the modeling accuracy. We discuss in Section~\ref{sec:cbf} a notion of control barrier functions and results for the computation of lower bounds on probabilities of safety specifications. We also provide a systematic approach for the computation of control barrier functions utilizing counter-example guided inductive synthesis approach. 
Section~\ref{sec:case} demonstrates the effectiveness of the proposed results on a jet-engine example. Finally, Section~\ref{conclusion} concludes the paper.

\section{Preliminaries and Problem Definition}\label{prelim}
\subsection*{Notations}
We denote the set of real, positive real, nonnegative real, and positive integer numbers by $\mathbb{R}$, $\mathbb{R}^+$, $\mathbb{R}_0^+ $, and $\mathbb{N}$, respectively. 
We use $\mathbb{R}^n$ to denote an $n$-dimensional Euclidean space and $\mathbb{R}^{n \times m}$ to denote a space of real matrices with $n$ rows and $m$ columns. 
We use $\bm I_n$ and $0_n$ to represent the identity matrix in $\R^{n\times n}$ and the zero vector in $\R^n$, respectively.  
We denote by $\mathcal{N}(\mu,C)$ the multivariate normal distribution with mean vector $\mu\in\R^n$ and covariance matrix $C\in\R^{n\times n}$.
For events $A_1,\ldots A_n$, we use notation $\bigcap_{i=1}^n A_i$ to represent inner product of events.
The reproducing kernel Hilbert space (RKHS) is a Hilbert space of square integrable functions that includes functions of the form $l(x)=\sum_i \alpha_i k(x,x_i)$, where $\alpha_i\in\R$, $x,x_i\in X\subset\R^n$, and $k: X\times X\ra\R^+_0$ is a symmetric positive definite function referred to as kernel. 
{The corresponding induced RKHS norm is denoted by $\|l\|_k$. 
For a detailed discussion on RKHS and RKHS norm, we kindly refer interested readers to \cite{paulsen2016introduction}.} \\

We consider nonlinear, continuous-time control affine systems $\mathcal S$ defined as 
\begin{align}\label{sys}
\dot{x}=f(x)+g(x)u,
\end{align}
where $x\in X\subset \mathbb{R}^n$ is the state vector in compact set $X$ and $u\in U\subseteq\mathbb{R}^m$ is the control input. 
We use $\phi(x_0, \upsilon, t)$ to denote the value of trajectory of the system $\mathcal S$ starting form initial state $x_0$ under the input signal $\upsilon:\R_0^+\rightarrow U$ at time $t\in\R_0^+$.
We use $f_j$ to represent the $j$th component of the vector function $f$, where $j\in\{1,\ldots,n\}$. In order to formulate the problem, we consider some assumptions as follow: 
\begin{assumption}\label{A1}
 In dynamics \eqref{sys}, the function $f:X\rightarrow \mathbb{R}^n$ is unknown and function $g:X\rightarrow \mathbb{R}^{n\times m}$ is known.	
\end{assumption} 
We also assume that the unknown function $f$ has low complexity, as measured under the reproducing kernel Hilbert space (RKHS) norm  \cite{paulsen2016introduction} as
described below.
\begin{assumption}\label{A2}
	The function $f$ in \eqref{sys} has bounded RKHS norm with respect to known kernel $k$, that is $\|f_j\|_k\leq\infty$ for all $j\in\{1,\ldots,n\}$.
\end{assumption}  
Note that, for most of the kernels used in practice, an RKHS is dense in the space of continuous functions restricted to a compact domain $X$. Thus, they can uniformly approximate any continuous function on a compact set $X$  \cite{seeger2008information}.
In addition, we require the following assumption on the availability of a training data set.
\begin{assumption}\label{A3}
	We have access to measurements $x\in X$ and $y=f(x)+w$, where $w \sim \mathcal{N}(0_n,\rho_f^2\bm I_n)$ is an additive noise.
\end{assumption}
Note that from a practical point of view, the measurements $f(x)$ (i.e., the derivative) can be obtained approximately using state measurements by running the system \eqref{sys} for sufficiently small sampling time from different initial conditions $x$ and with input signal $\upsilon\equiv 0$. To accommodate the approximation uncertainties, we consider the measurement noise $w$.\\
The main controller synthesis problem in this work is formally defined next.
\begin{problem}\label{prob1}
	Consider a system $\mathcal{S}$ in \eqref{sys} satisfying Assumptions \ref{A1}, \ref{A2}, and \ref{A3}, an initial set $X_0\subseteq X$, and unsafe set $X_1\subseteq X$. 
	We aim at computing a controller $($if existing$)$ that provides a $($possibly tight$)$ data-dependent lower bound on the probability on the trajectory  of $\mathcal{S}$ starting from any $x_0\in X_0$ avoiding unsafe set $X_1$.
\end{problem}
Finding a solution to Problem \ref{prob1} (if existing) is difficult in general. In this paper, we provide a method that is sound in solving this problem (i.e., if the method provides a solution to the problem, then we can formally conclude that it is actually a solution. However, if the method reports failure, then the solution to the problem may or may not exist).  To find a control policy $\upsilon$, we first model the unknown dynamics using Gaussian processes and then develop a notion of control barrier function that is robust to the data-dependent modeling uncertainty with some confidence.
\section{Gaussian Process Model}\label{gpm}
Gaussian processes (GPs) are a non-parametric regression method, where the goal is to find an approximation of a nonlinear map $f:X\ra\R^n$. Since $f$ is $n$-dimensional, each component $f_j$ is approximated with a Gaussian process $\hat f_j(x)\sim\mathcal{GP}(\mathsf{m}_j(x),k_j(x,x'))$, $j\in\{1,\ldots,n\}$, where $\mathsf{m}_j: X\ra\R$ is a mean function and $k_j:X\times X\ra\R$ is a covariance function (a.k.a., kernel) which measures similarity between any two states $x,x'\in X$. In general, any real-valued function can be used for $\mathsf{m}_j$ (it is common practice to set $\mathsf{m}_j(x)=0$, $\forall x\in X$ and $\forall j\in\{1,\ldots,n\}$) and the choice of kernel function is problem dependent, the most commonly used kernels include the linear, squared-exponential, and Mat\`ern kernels  \cite{williams2006gaussian}. The approximation of $f$ with $n$ independent GPs is given as 
\begin{align}
\hat f(x)=\left\{\begin{matrix}
\hat f_1(x)\sim\mathcal{GP}(0,k_1(x,x')),\\ 
\vdots \\ 
\hat f_n(x)\sim\mathcal{GP}(0,k_n(x,x')).
\end{matrix}\right.
\end{align}
Given a set of $N$ measurements $\{y^{(1)},\ldots,y^{(N)}\}$ and $\{x^{(1)},\ldots,x^{(N)} \}$, where $y^{(i)}=f(x^{(i)})+w^{(i)}$, $i\in\{1,\ldots,N\}$ as in Assumption \ref{A3}, the posterior distribution corresponding to $f_j(x)$, for $j\in\{1,\ldots,n\}$ at an arbitrary state $x\in X$ is  computed as a normal distribution $\mathcal N(\mu_j(x),\rho_j(x))$ with mean and covariance
\begin{align}
\mu_j(x)&=\overline k_j^T( K_j+\rho_f^2\bm I_N)^{-1}y_j,\label{mean}\\
\rho_j^2(x)&=k_j(x,x)-\overline k_j^T( K_j+\rho_f^2\bm I_N)^{-1}\overline k_j,\label{SD}
\end{align}
where $\overline k_j=[k_j(x^{(1)},x)\cdots k_j(x^{(N)},x)]^T\in\R^N$, $y_j=[y_j^{(1)} \cdots y_j^{(N)}]^T\in\R^N$, and 
\begin{align*}
K_j=\begin{bmatrix}
k_j(x^{(1)},x^{(1)})  &\cdots   & k_j(x^{(1)},x^{(N)}) \\ 
\vdots & \ddots  & \vdots \\ 
k_j(x^{(N)},x^{(1)}) &  \cdots & k_j(x^{(N)},x^{(N)}) 
\end{bmatrix}\in\R^{N\times N}. 
\end{align*}
Now consider the bound $\overline{\rho}_j^2=\max_{x\in X}\rho_j^2(x)$. 
The existence of such bound follows from continuity of kernels. The approximation of overall $f$ can be obtained by concatenating $\mu_j$ and $\rho_j$ in \eqref{mean} and \eqref{SD} as follows    
\begin{align}
\mu(x)&:=[\mu_1(x), \ldots, \mu_n(x)]^T,\label{eq:mu_gp}\\
\rho^2(x)&:=[\rho^2_1(x), \ldots, \rho^2_n(x)]^T.\label{eq:rho_gp}
\end{align}
Considering Assumption \ref{A2}, one can upper bound the difference between true value of $f_j(x)$  and inferred mean $\mu_j(x)$ with high probability as given in the next proposition.
\begin{proposition}\label{lemma1}
	Consider a system $\mathcal{S}$ in \eqref{sys} with Assumptions \ref{A2} and \ref{A3}, and the learned Gaussian process model with mean $\mu_j$ and standard deviation $\rho_j$ as given in \eqref{mean} and \eqref{SD}, respectively. Then, the model error is bounded by
	\begin{align}\label{probb1}
	\mathbb{P}\Big\{&\hspace{-.2em}\bigcap_{j=1}^n\hspace{-.2em}\mu_j(x)\hspace{-.2em}-\hspace{-.2em}\beta_j\rho_j(x)\hspace{-.1em}\leq\hspace{-.1em} f_j(x) \hspace{-.1em}\leq\hspace{-.1em}\mu_j(x)\hspace{-.2em}+\hspace{-.2em}\beta_j\rho_j(x),\forall x\in X\Big\}\nonumber\\&\geq(1-\varepsilon)^n,
	\end{align}
	with $\varepsilon\in(0,1)$ and $\beta_j:=\sqrt{2\|f_j\|_{k_j}^2+300\gamma_j \log^3(\frac{N+1}{\varepsilon})}$, where $N$ is the number of finite data samples and  
	$\gamma_j$ is the information gain $($see Remark \ref{info_gain}$)$.
\end{proposition}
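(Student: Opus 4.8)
The plan is to reduce the joint bound over all $n$ coordinates to the standard single-output concentration inequality for Gaussian process regression applied to each $f_j$ separately, and then to exploit the independence of the $n$ Gaussian processes to combine the individual confidence levels into the product $(1-\varepsilon)^n$.

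First I would invoke the known information-theoretic bound for a single scalar-valued function of bounded RKHS norm. Under Assumptions \ref{A2} and \ref{A3}, for each fixed $j\in\{1,\ldots,n\}$ this result guarantees that, with $\beta_j=\sqrt{2\|f_j\|_{k_j}^2+300\gamma_j\log^3(\frac{N+1}{\varepsilon})}$,
\[
\mathbb{P}\Big\{\,|f_j(x)-\mu_j(x)|\leq\beta_j\rho_j(x),\ \forall x\in X\,\Big\}\geq 1-\varepsilon.
\]
The derivation of this per-component inequality is the technical heart of the argument: it rests on a self-normalizing martingale concentration bound, combined with the maximal information gain $\gamma_j$ to control the effective complexity of the posterior, and this is exactly where the constants $2$ and $300$ and the $\log^3$ dependence originate. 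Since the statement is available in the literature under precisely the hypotheses assumed here (bounded RKHS norm and i.i.d.\ Gaussian measurement noise), I would cite it rather than reprove it.

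Next I would let $A_j$ denote the event appearing inside the probability above, i.e.\ the event that $f_j$ lies in the confidence tube $[\mu_j-\beta_j\rho_j,\ \mu_j+\beta_j\rho_j]$ uniformly over $X$. The joint event in \eqref{probb1} is then exactly $\bigcap_{j=1}^n A_j$. Because $f$ is approximated by $n$ \emph{independent} Gaussian processes, one per coordinate, each trained only on the corresponding noisy outputs $y_j$ with its own independent noise channel $w_j\sim\mathcal N(0,\rho_f^2)$, the events $A_1,\ldots,A_n$ are mutually independent. Consequently
\[
\mathbb{P}\Big\{\bigcap_{j=1}^n A_j\Big\}=\prod_{j=1}^n\mathbb{P}(A_j)\geq(1-\varepsilon)^n,
\]
which is precisely the claimed bound.

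The main obstacle is conceptual rather than computational: essentially all of the quantitative content of the proposition is carried by the single-output inequality, so the genuinely new step is the independence argument that yields the product. That step must be justified with some care, since it relies entirely on the modeling choice that each coordinate is regressed by its own GP with its own noise realization, ensuring that the randomness governing $A_j$ does not couple across the index $j$; were the coordinate outputs correlated, one would have to fall back on a union bound with confidence $1-n\varepsilon$ instead of the sharper product $(1-\varepsilon)^n$.
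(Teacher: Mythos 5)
Your proposal is correct and follows essentially the same route as the paper: the paper likewise cites the scalar-output confidence bound of Srinivas et al.\ (Theorem~6 there, via Lemma~2 of Umlauft et al.) for each component $f_j$ and then extends it to the $n$-dimensional output, with the product $(1-\varepsilon)^n$ arising exactly from the independence of the $n$ per-coordinate GPs that you spell out. Your version is simply a more explicit account of the independence step that the paper leaves implicit.
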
 
\begin{proof}
	The proof is similar to that of \cite[Lemma 2]{umlauft2018uncertainty} and follows from \cite[Theorem 6]{srinivas2012information} by extending the inequality (given for the scalar case) $\mathbb{P}\Big\{\mu_j(x)-\beta_j\rho_j(x)\leq f_j(x) \leq\mu_j(x)+\beta_j\rho_j(x),\forall x\in X\Big\}\geq(1-\varepsilon)$ to $n$-dimensional state-set.
\end{proof}
By using the bound $\overline\rho_j$, one can rewrite \eqref{probb1} as
\begin{align}\label{aaa}
\mathbb{P}\Big\{f(x)\in\{\mu(x) +d\mid d\in\mathcal{D}\}, \forall x\in X\Big\}\geq(1-\varepsilon)^n,
\end{align}
where $\mathcal{D}:=\{[d_1,\ldots, d_n]^T\mid d_j\in[-\beta_j\overline{\rho}_j, \beta_j\overline{\rho}_j], j\in\{1,\ldots,n\}\}$.
\begin{remark}\label{info_gain}
	The information gain $\gamma_j$ in Proposition \ref{lemma1} quantifies the maximum mutual information between a finite set of data-samples and actual function $f_j$. Exact evaluation of $\gamma_j$ is an NP-hard problem in general, however, it can be greedily approximated and has a sublinear dependency on the number of data-samples $N$ for many commonly used kernels. For the detailed discussion regarding the upper bound on $\gamma_j$, we kindly refer the interested readers to the work in \cite{srinivas2012information}.
\end{remark}
\section{Control Barrier Functions}\label{sec:cbf}
In this section, we provide sufficient conditions using so-called control barrier functions under which, we can show the result providing guarantees on safety specifications. First, we provide the result assuming the availability of full knowledge of the system. Then, we use it to provide the result for unknown systems. The following theorem is inspired by the result of Proposition 2 in \cite{prajna2007framework}.       
\begin{theorem}\label{lemma12}
	Consider a system $\mathcal{S}$ in \eqref{sys} and sets $X_0$, $X_1\subseteq X$. Suppose there exists a differentiable function $B:X\ra\R$ satisfying following conditions
	\begin{align}
	\forall x\in X_0\qquad\quad& B(x)\leq 0, 	 \label{ineq1}\\
	\forall x\in X_1 \qquad\quad& B(x)> 0, \qquad\quad	\label{ineq2}\\
	\forall x\in X \ \exists u\in U \qquad\quad&\frac{\partial B}{\partial  x}(x)(f(x)+g(x)u)\leq 0.\label{ineq3}
	\end{align}
	Then, for a trajectory $\phi$ of system \eqref{sys} starting from any $x_0\in X_0$ under a control policy $\upsilon$ associated to $B$ $($cf. condition \eqref{ineq3}$)$, one gets $\phi(x_0,\upsilon,t)\notin X_1$ $\forall t\in\R^+_0$.
\end{theorem}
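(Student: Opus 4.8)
The plan is to treat $B$ as a barrier certificate and show that it cannot increase along the closed-loop trajectory, so that a trajectory initialized in the sublevel region $\{x : B(x) \le 0\}$ (which contains $X_0$ by \eqref{ineq1}) can never reach the region $\{x : B(x) > 0\}$ (which contains $X_1$ by \eqref{ineq2}). Concretely, I would fix an arbitrary $x_0 \in X_0$, let $\upsilon$ denote the control policy associated with $B$ through condition \eqref{ineq3} --- that is, at each visited state $x$ the policy returns some $u \in U$ realizing $\frac{\partial B}{\partial x}(x)(f(x)+g(x)u) \le 0$ --- and abbreviate the resulting trajectory as $\phi(t) := \phi(x_0,\upsilon,t)$.

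First I would differentiate $B$ along this trajectory. Since $B$ is differentiable and $\phi$ satisfies $\dot{\phi}(t) = f(\phi(t)) + g(\phi(t))\upsilon(t)$, the chain rule yields
\begin{align*}
\frac{\diff}{\diff t} B(\phi(t)) = \frac{\partial B}{\partial x}(\phi(t))\bigl(f(\phi(t)) + g(\phi(t))\upsilon(t)\bigr),
\end{align*}
which is nonpositive precisely by the defining property \eqref{ineq3} of $\upsilon$. Integrating from $0$ to any $t \in \R_0^+$ then gives $B(\phi(t)) - B(\phi(0)) = \int_0^t \frac{\diff}{\diff s} B(\phi(s))\,\diff s \le 0$, and since $\phi(0) = x_0 \in X_0$ we obtain $B(\phi(t)) \le B(x_0) \le 0$ from \eqref{ineq1}.

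Having established $B(\phi(t)) \le 0$ for all $t$, the conclusion follows by contradiction: were $\phi(t) \in X_1$ at some time, condition \eqref{ineq2} would force $B(\phi(t)) > 0$, contradicting the monotonicity bound just derived. Hence $\phi(x_0,\upsilon,t) \notin X_1$ for all $t \in \R_0^+$, as claimed.

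The step I expect to be most delicate is not the certificate inequality itself but the well-posedness underlying the argument. Condition \eqref{ineq3} only asserts the \emph{pointwise} existence of an admissible input, so one must argue that these selections assemble into a control signal $\upsilon$ for which the closed-loop trajectory exists and is absolutely continuous, which is exactly what justifies the chain rule and the integration above; in the intended reading this is folded into the phrase ``control policy associated to $B$.'' A related subtlety is that \eqref{ineq3} is imposed only on the compact set $X$, so strictly speaking the monotonicity estimate is valid while $\phi$ remains in $X$. Because $X_1 \subseteq X$, this is enough to preclude entry into $X_1$, but the restriction should be noted rather than glossed over.
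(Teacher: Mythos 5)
Your proof is correct and follows essentially the same route as the paper's: both derive $B(\phi(t)) \le B(\phi(0)) \le 0$ from condition \eqref{ineq3} and the initial condition \eqref{ineq1}, then contradict \eqref{ineq2} to exclude entry into $X_1$. Your version is somewhat more explicit than the paper's (which simply asserts the monotonicity of $B$ along the trajectory without spelling out the chain rule, integration, or the well-posedness of the closed-loop signal), but the underlying argument is identical.
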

\begin{proof}
	We prove by contradiction. Suppose that there exists $t\in\R_0^+$ such that the trajectory of $\mathcal{S}$ starting from $x_0 \in X_0$, $\phi(x_0,\upsilon,t)$ reaches a state inside $X_1$. Following \eqref{ineq1} and \eqref{ineq2}, one has $B(\phi(x_0,\upsilon,0))\leq 0$ and $B(\phi(x_0,\upsilon,t))>0$ for some $t\in\R_0^+$. By using inequality \eqref{ineq3}, one can conclude that $B(\phi(x_0,\upsilon,t))\leq B(\phi(x_0,\upsilon,0))\leq 0$ for all $t\in\R^+_0$. This contradicts our assumption and concludes the proof.
\end{proof}	
The function $B$ in Theorem \ref{lemma12} satisfying \eqref{ineq1}-\eqref{ineq3} is usually referred to as a control barrier function.
\begin{remark}
	Condition \eqref{ineq3} implicitly associates a controller $\mathbf{u}:X\ra U$ according to the existential quantifier on $u$ for any $x\in X$. The stationary control policy $\upsilon$ driving the system is readily given by $\upsilon(t)=\mathbf{u}(\phi(t))$, where $\phi(t)$ is the value of the trajectory of the system at time instance $t\in\R_0^+$. 
\end{remark}
Now, in order to extend the above result to unknown systems described in Section \ref{prelim}, we first learn the Gaussian process model as discussed in Section \ref{gpm}.

Particularly, the next result provides a probabilistic guarantee on the result given in Theorem \ref{lemma12}.
{\begin{theorem}\label{thm1}
		Consider a system $\mathcal{S}$ in \eqref{sys}, the learned Gaussian process model with mean $\mu(\cdot)$ and covariance $\rho^2(\cdot)$ as given in \eqref{eq:mu_gp} and \eqref{eq:rho_gp}, and the result in Proposition \ref{lemma1}. Let $X_0$, $X_1\subseteq X$. Suppose there exists a differentiable function $B:X\ra\R$ such that
		\begin{align}
		\forall x\in X_0\quad& B(x)\leq 0, 	 \label{ineq11}\\
		\forall x\in X_1 \quad& B(x)> 0,	\label{ineq21}\\
		\forall x\in X \ \exists u\in U \ \forall  d\in\mathcal{D}\quad&\frac{\partial B}{\partial x}(x)( \mu(x)+ d+g(x) u)\leq 0  \label{ineq31},
		\end{align}
		where $\mathcal{D}$ is the set defined in \eqref{aaa}. Then, for a trajectory $\phi$ of \eqref{sys} starting from any $x_0\in X_0$ under a control policy $\upsilon$ associated to $B$ $($cf. condition \eqref{ineq31}$)$, one has $\phi(x_0,\upsilon,t)\not\in X_1$ for all $t\in\R^+_0$ with probability at least $(1-\varepsilon)^n$.
	\end{theorem}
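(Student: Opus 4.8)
The plan is to combine the deterministic guarantee of Theorem~\ref{lemma12} with the probabilistic model-error bound of Proposition~\ref{lemma1}. The key observation is that conditions \eqref{ineq11}--\eqref{ineq31} are exactly the conditions \eqref{ineq1}--\eqref{ineq3} of Theorem~\ref{lemma12}, except that the true (unknown) drift $f(x)$ has been replaced by the learned surrogate $\mu(x)+d$ ranging over the uncertainty set $\mathcal{D}$. So the strategy is: first identify the high-probability event on which the true dynamics are captured by this surrogate family, and then show that on that event the hypotheses of Theorem~\ref{lemma12} hold verbatim, so its safety conclusion follows.

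Concretely, I would first invoke Proposition~\ref{lemma1} in the form \eqref{aaa}, which states that the event
\begin{align*}
\mathcal{E}:=\Big\{f(x)\in\{\mu(x)+d\mid d\in\mathcal{D}\},\ \forall x\in X\Big\}
\end{align*}
occurs with probability at least $(1-\varepsilon)^n$. The heart of the argument is then a pointwise implication conditioned on $\mathcal{E}$: fix any $x\in X$ and let $u\in U$ be the input whose existence is guaranteed by \eqref{ineq31}. Because \eqref{ineq31} holds for \emph{all} $d\in\mathcal{D}$, and because on $\mathcal{E}$ the true value $f(x)$ equals $\mu(x)+d$ for some particular $d\in\mathcal{D}$, I can substitute that $d$ to obtain
\begin{align*}
\frac{\partial B}{\partial x}(x)\big(f(x)+g(x)u\big)=\frac{\partial B}{\partial x}(x)\big(\mu(x)+d+g(x)u\big)\leq 0.
\end{align*}
Thus on $\mathcal{E}$ the barrier function $B$ satisfies the deterministic condition \eqref{ineq3} for the true system \eqref{sys}, while \eqref{ineq11} and \eqref{ineq21} are identical to \eqref{ineq1} and \eqref{ineq2} and hold unconditionally. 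Applying Theorem~\ref{lemma12} then yields $\phi(x_0,\upsilon,t)\notin X_1$ for all $t\in\R^+_0$ whenever $\mathcal{E}$ holds.

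Finally I would close the probabilistic bookkeeping: the safety property holds on the event $\mathcal{E}$, hence
\begin{align*}
\PP\big\{\phi(x_0,\upsilon,t)\notin X_1,\ \forall t\in\R_0^+\big\}\geq\PP(\mathcal{E})\geq(1-\varepsilon)^n,
\end{align*}
which is the claimed lower bound. The step I expect to require the most care is the \emph{universal-over-$d$} argument: the subtlety is that the input $u$ supplied by \eqref{ineq31} must render $\tfrac{\partial B}{\partial x}(x)(\mu(x)+d+g(x)u)$ nonpositive simultaneously for every $d\in\mathcal{D}$, so that it in particular handles the unknown realized $d$ that equals $f(x)-\mu(x)$ on $\mathcal{E}$. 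Establishing that this single robust $u$ indeed works for the true (adversarially unknown within $\mathcal{D}$) drift is precisely what upgrades the deterministic barrier certificate to one valid under model uncertainty, and the $(1-\varepsilon)^n$ confidence is simply inherited from the event on which the learned model brackets the truth.
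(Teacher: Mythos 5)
Your proposal is correct and follows essentially the same route as the paper: invoke Proposition~\ref{lemma1} (in the form \eqref{aaa}) to get the high-probability event on which $f(x)$ lies in the surrogate family $\{\mu(x)+d \mid d\in\mathcal{D}\}$, observe that the robust condition \eqref{ineq31} then yields the deterministic condition \eqref{ineq3} by substituting the realized $d$, and apply Theorem~\ref{lemma12} to inherit the $(1-\varepsilon)^n$ bound. If anything, your explicit conditioning on the event $\mathcal{E}$ is a cleaner piece of probabilistic bookkeeping than the paper's computation with probabilities of implications between events $A_1$, $A_2$, $A_3$, but the underlying argument is the same.
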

	\begin{proof}
		The first two inequalities are equivalent to \eqref{ineq1} and \eqref{ineq2}. Now consider condition \eqref{ineq31}.
		Following the result of Proposition \ref{lemma1}, we have $\mathbb{P}\Big\{f(x)\in\{\mu(x) +d\mid d\in\mathcal{D}\}, \forall x\in X\Big\}\geq(1-\varepsilon)^n$ which implies the following 
		$$\mathbb{P}\{\forall x\in X \ \exists u\in U,\ \frac{\partial B}{\partial x}(x)(f(x)+g(x)u)\leq 0\}\geq (1-\varepsilon)^n.$$
		Let us consider following events: $A_1$ representing existence of $B$ satisfying \eqref{ineq11}-\eqref{ineq31}, $A_2$ representing existence of $B$ satisfying \eqref{ineq1}-\eqref{ineq3}, and $A_3$ representing that $\phi(x_0,\upsilon,t)\notin X_1$ $\forall x_0\in X_0$ $\forall t\in\R^+_0$.	
		Now,
		$\mathbb{P}\{A_1\implies A_3\}\geq\mathbb{P}\{(A_1\implies A_2) \text{ and } (A_2\implies A_3 )\}=\mathbb{P}\{A_1\implies A_2\} \mathbb{P}\{A_2\implies A_3\}\geq (1-\varepsilon)^n$. The last equality follows from the fact that the events $A_1\implies A_2$ and $A_2\implies A_3$ are mutually independent. The last inequality is obtained by using $\mathbb{P}\{A_2\implies A_3\}=1$ which follows from Theorem \ref{lemma12} and $\mathbb{P}\{A_1\implies A_2\}\geq (1-\varepsilon)^n$ showed in the first part of the proof. Thus, the existence of $B$ satisfying \eqref{ineq11}-\eqref{ineq31} implies that
		for any $x_0\in X_0$, one has $\phi(x_0,\upsilon,t)\notin X_1$ for all $t\in\mathbb{R}^+_0$ with probability at least $(1-\varepsilon)^n$. 
\end{proof}}
Note that condition \eqref{ineq31} in Theorem \ref{thm1} associates a control policy $\upsilon$ according to the existential quantifier on $u$ which provides input trajectory $\upsilon$ given state trajectory $\phi$ and for any arbitrary choice of $d\in\mathcal{D}$.
\begin{remark}
    One can easily extend the result approach to provide a probabilistic guarantee on the satisfaction of more general temporal specifications expressed using infinite strings over deterministic co-Buchi automata (DCA) by decomposing it to a sequence of safety specifications as discussed in \cite[Section 4.1]{jagtap2020compositional}. Then, we can combine the probability bounds and corresponding controllers obtained with the help of Theorem \ref{thm1} for each safety property to obtain a hybrid control policy that provides probability bound on the satisfaction of specification given by DCA (see \cite[Section IV]{jahanshahi2020synthesis} for a similar result). 
\end{remark}
\subsection{Computation of Control Barrier Functions}\label{cegis}
Proving the existence of a control barrier function and finding one are in general hard problems. 
However, under the assumption of having a finite input set $U=\{u_1,u_2,\ldots,u_l\}$, where $u_i\in\R^m$, $i\in\{1,2,\ldots,l\}$, one can search for a parametric control barrier functions and corresponding control policies.
To search for control barrier functions, we propose the use of counterexample guided inductive synthesis (CEGIS) approach. 
This approach uses feasibility solvers for finding control barrier functions of a given parametric form using Satisfiability Modulo Theories (SMT) solvers such as \texttt{Z3}  \cite{z3}, \texttt{MathSAT}  \cite{cimatti2013mathsat5}, or \texttt{dReal}  \cite{Gao.2013}. 
The feasibility condition that is required to be satisfied for the existence of a control barrier function $B$ is given in the following lemma.
\begin{lemma} \label{cbclem}
	Consider a control affine system $\mathcal S $ with a finite input set $U=\{u_1,u_2,\ldots,u_l\}$, where $u_i\in\R^m$, $i\in\{1,2,\ldots,l\}$. Suppose there exists a function $B(x)$ such that the following expression is true
	\begin{align}\label{feas1}
	&\bigwedge_{x \in X_0}B(x) \leq 0 \bigwedge_{x \in X_{1}} B(x) > 0\nonumber\\
	&\bigwedge_{x \in X}\hspace{-.2em}\Big(\bigvee_{u\in U}\hspace{-.2em}\Big(\bigwedge_{d \in \mathcal{D}}\hspace{-.2em}\big(\frac{\partial B}{\partial x}(x)(\mu(x)+d+g(x)u)\hspace{-.2em}\leq\hspace{-.2em} 0\big)\Big)\Big).
	\end{align}
	Then, $B(x)$ satisfies conditions \eqref{ineq11}-\eqref{ineq31} in Theorem \ref{thm1} and any $\mathbf{u}:X\rightarrow U$ with $\mathbf{u}(x):=\{u\in U\mid\frac{\partial B}{\partial x}( \mu(x)+ d+g(x)u)\leq 0\}$ for any arbitrary choice of $d\in\mathcal D$ is a corresponding controller.
\end{lemma}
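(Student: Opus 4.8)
The plan is to show that the feasibility expression \eqref{feas1} is a faithful logical transcription of conditions \eqref{ineq11}--\eqref{ineq31} in Theorem \ref{thm1}, obtained by replacing each quantifier with the corresponding Boolean connective; the conclusion then follows immediately by invoking Theorem \ref{thm1}. Since $\frac{\partial B}{\partial x}$ appears in \eqref{feas1}, $B$ is implicitly differentiable, so that hypothesis of Theorem \ref{thm1} is met automatically.

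First I would dispatch the two easy clauses. By the definition of the (possibly infinite) conjunction $\bigwedge$, the clause $\bigwedge_{x\in X_0}B(x)\leq 0$ is literally the universal statement $\forall x\in X_0,\ B(x)\leq 0$, i.e. condition \eqref{ineq11}; likewise $\bigwedge_{x\in X_1}B(x)>0$ is condition \eqref{ineq21}.

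The one clause deserving care is the third. Here I would use the finiteness of $U=\{u_1,\dots,u_l\}$: over a finite index set an existential quantifier is exactly a finite disjunction, so $\bigvee_{u\in U}(\cdots)$ encodes $\exists u\in U\,(\cdots)$, while the outer $\bigwedge_{x\in X}$ and inner $\bigwedge_{d\in\mathcal{D}}$ encode $\forall x\in X$ and $\forall d\in\mathcal{D}$. The point worth emphasizing is that the \emph{nesting} of the connectives preserves the quantifier alternation of \eqref{ineq31}: because $\bigvee_{u}$ sits inside $\bigwedge_{x}$ but outside $\bigwedge_{d}$, the selected $u$ may depend on $x$ yet must render the inequality true uniformly over every $d\in\mathcal{D}$, which is exactly the scoping $\forall x\,\exists u\,\forall d$. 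Hence the third clause is condition \eqref{ineq31}.

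Having matched all three clauses, I would invoke Theorem \ref{thm1} to conclude that $B$ is a control barrier function with the stated probabilistic guarantee. For the controller, truth of the inner disjunction at each $x$ means the set $\{u\in U\mid \frac{\partial B}{\partial x}(x)(\mu(x)+d+g(x)u)\leq 0 \ \forall d\in\mathcal{D}\}$ is nonempty for every $x\in X$, so $\mathbf{u}(\cdot)$ is well-defined and realizes \eqref{ineq31}. I expect the main (and essentially only) subtlety to be the finiteness of $U$, which is what legitimizes converting the existential quantifier into a Boolean disjunction amenable to an SMT solver; without it the third clause would be an infinite disjunction and no longer a valid SMT query.
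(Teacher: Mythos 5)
Your proof is correct and matches the paper's (implicit) reasoning exactly: the paper states this lemma without proof, treating it as an immediate consequence of reading the conjunctions and the finite disjunction over $U$ as the quantifiers $\forall x\,\exists u\,\forall d$ in conditions \eqref{ineq11}--\eqref{ineq31} and then invoking Theorem \ref{thm1}. Your added observations --- that the nesting of $\bigvee_{u}$ between the two conjunctions preserves the quantifier alternation, and that truth of the disjunction at each $x$ makes $\mathbf{u}$ well-defined --- are precisely the right justifications for the steps the paper leaves unstated.
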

In order to find a function $B$ in Lemma \ref{cbclem} using CEGIS framework, we consider a function of the parametric form $B(a,x)= \sum_{i=1}^{p}a_i b_i(x)$ with some user-defined (nonlinear) basis functions $b_i(x)$ and unknown coefficients $a_i \in \mathbb{R}, i \in \{1, 2,\ldots, p\}$. With this choice of barrier function the feasibility expression \eqref{feas1} can be rewritten as
\begin{align}\label{CEGIS_algo}
&\psi(a,x)\hspace{-0.2em}:=\bigwedge_{x \in X_0}B(a, x) \leq 0 \bigwedge_{x \in X_{1}} B(a, x) > 0   \nonumber \\ 
&\bigwedge_{x \in X}\hspace{-.2em}\Big(\hspace{-.2em}\bigvee_{u\in U}\hspace{-.2em}\Big(\hspace{-.2em}\bigwedge_{d \in \mathcal{D}}\hspace{-.2em}\big(\frac{\partial B}{\partial x}(a,x)(\mu(x)+d+g(x)u)\hspace{-.2em}\leq \hspace{-.2em}0\big)\hspace{-.1em}\Big)\hspace{-.1em}\Big).
\end{align}
The coefficients $a_i$ can be efficiently found using SMT solvers such as \texttt{Z3} or \texttt{MathSAT} for the finite set $\overline{X}\subset X$ of data samples. We denote the obtained candidate barrier function with fixed coefficients $a_i$ by $B(a,x)|_a$ and the corresponding feasibility expression by $\psi(a,x)|_a$. Next we obtain counterexample $x\in X$ such that $B(a,x)|_a$ satisfies $\neg \psi(a,x)|_a$. If $\neg \psi(a,x)|_a$ has no feasible solution, then the obtained $B(a,x)|_a$ is a true control barrier function. If $\neg \psi(a,x)|_a$ is feasible, we update data samples as $\overline{X}=\overline{X}\cup x$ and recompute coefficients $a_i$ iteratively until $\neg \psi(a,x)|_a$ becomes infeasible. For the detailed discussion on the CEGIS approach, we kindly refer interested readers to \cite[Subsection 5.3.2]{jagtap2019formal}. The pseudocode for CEGIS framework to compute such control barrier functions is given in Algorithm \ref{algo}.
Note that, this CEGIS algorithm either $(i)$ runs forever, or $(ii)$ terminates after some finite iterations with a control barrier function satisfying \eqref{feas1}, or $(iii)$ terminates with a counter example proving that no solution exists for the considered structure of barrier function. In order to guarantee termination of the algorithm, one can fix some upper bound on the number of unsuccessful iterations.\\
\begin{algorithm}[t] 
	\caption{CEGIS Framework}
	\label{algo}
	\begin{algorithmic}[1]
		\State Define $\overline{X}\subset X$\Comment{set of finite data samples in $X$} 
		\State Define $B(a,x):= \sum_{i=1}^{p}a_ib_i(x)$
		\While {True}
		\If{$\psi(a,x)$ is \textit{unsat}} 
		\State \textit{infeasible} 
		\State \textbf{break}
		\Else
		\State Compute candidate $B(a,x)|_a$ satisfying \eqref{CEGIS_algo} \State for all $x\in\overline{X}$
		\If{$\neg \psi(a,x)|_a$ is \textit{unsat}}
		\State {$B(a,x)|_a$ is a control barrier function} 
		\State \textbf{break}
		\Else
		\State $cex=x\in X \ s.t.\  \neg \psi(a,x)|_a\text{ is }sat$
		\State $\overline{X} \leftarrow \overline{X}\cup cex$
		\EndIf
		\EndIf
		\EndWhile
	\end{algorithmic}
\end{algorithm}
 \section{Case Study}\label{sec:case}
 \begin{figure}[t] 
 	\centering
 	\subfigure[]{\includegraphics[scale=0.4]{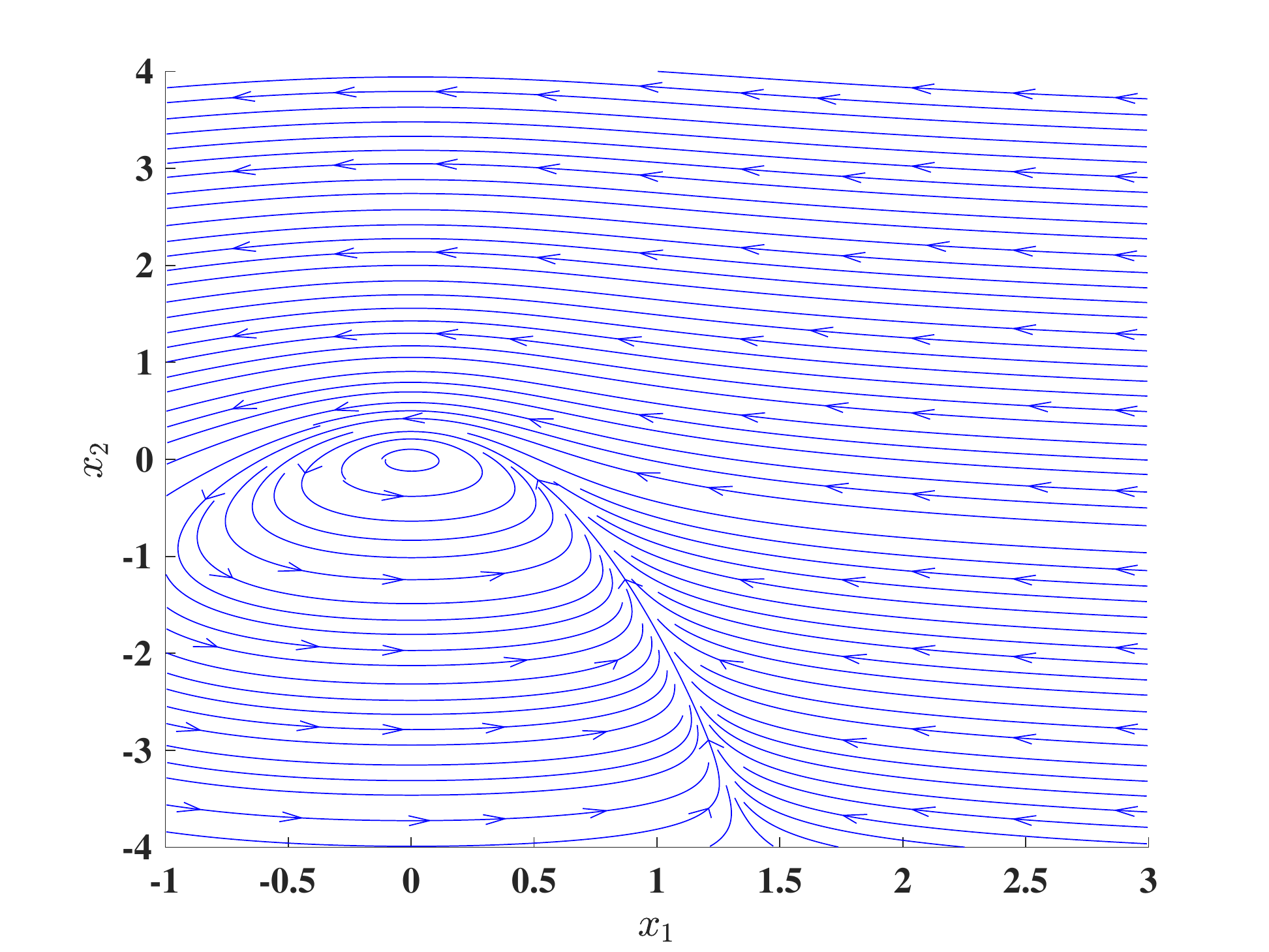}}\\\vspace{-1em}\subfigure[]{\includegraphics[scale=0.4]{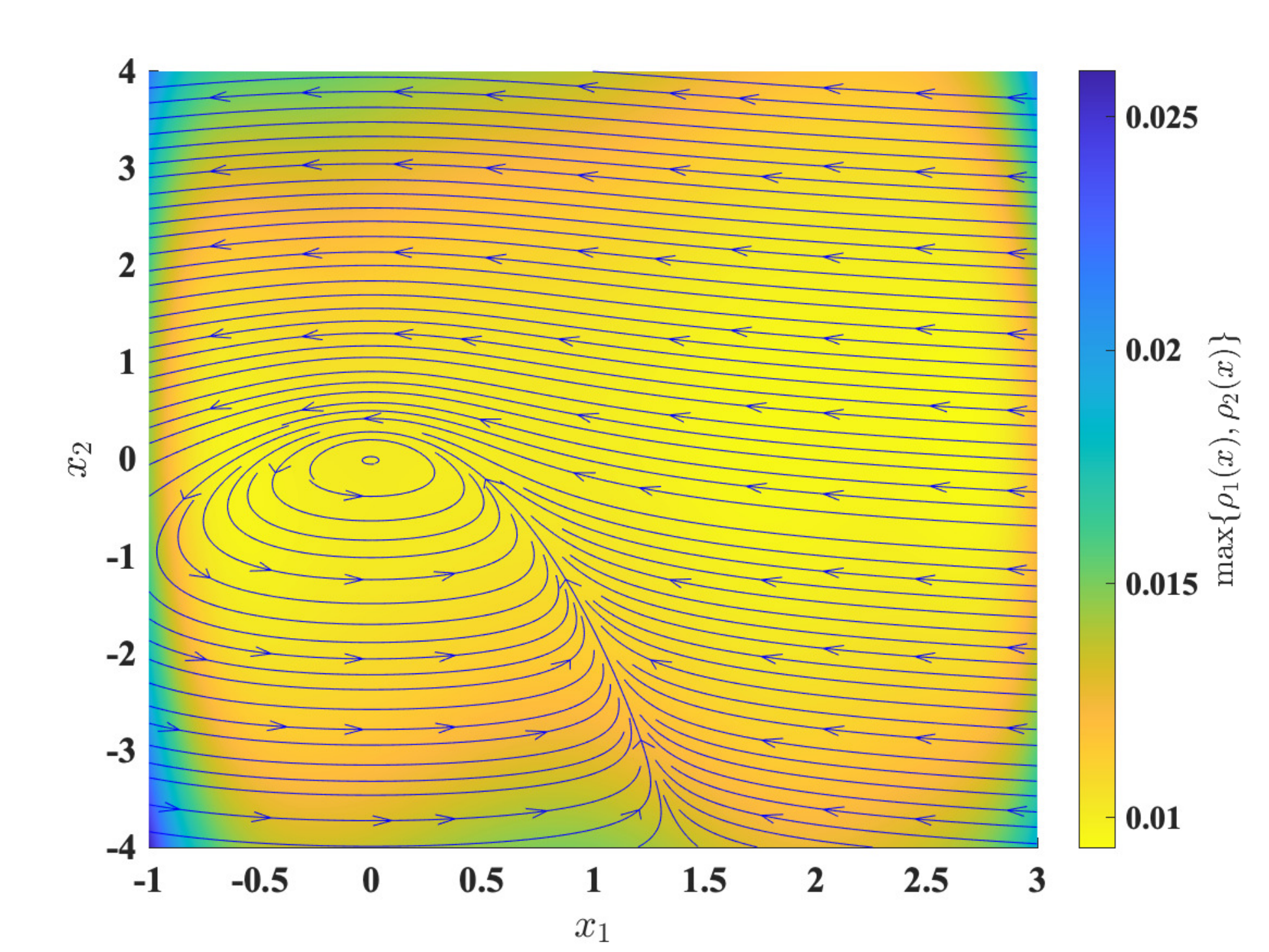}} 
 	\caption{Illustration of the vector field learned using GPs: (a) vector field $f(x)$ of the original system, (b) learned vector field ${\mu}(x)$ with the maximum standard deviation shown using colormap.}
 	\label{fig:learned_model}
 \end{figure}
We consider the nonlinear Moore-Greitzer jet engine model in no-stall mode used by \cite{zamani2011symbolic} as a case study.
Consider the unknown nonlinear dynamics given as  
\begin{align*}
f(x)= \begin{bmatrix}
f_1(x)\\ 
f_2(x)
\end{bmatrix}=\begin{bmatrix}
-x_2-\frac{3}{2}x_1^2-\frac{1}{2}x_1^3\\ 
x_1
\end{bmatrix} \text{ and } g(x)=\begin{bmatrix}
0\\ 
-1
\end{bmatrix},
\end{align*}
where $x=[x_1,x_2]^T$, $x_1=\Phi-1$, $x_2=\Psi-\psi-2$, $\Phi$ is the mass flow, $\Psi$ is the pressure rise, and $\psi$ is a constant. The control input $u\in U=\{-2,-1.5,-1,-0.5,0,0.5,1,1.5,2\}$. We consider a compact state-space $X=[-1,3]\times[-4,4]$ and $X_0=[0,1]\times[-1,1]$, $X_1=[-1,0]\times[-4,-2.5] \cup [-1,3]\times[2,4]$. 
Note that functions $f_1$ and $f_2$ are continuous thus they satisfy Assumption \ref{A2}  \cite{srinivas2012information}. 

In order to synthesize a control policy ensuring safety specification, we first learn the unknown model using Gaussian processes. For learning Gaussian process model, we collected $35$ data samples of $x$ and $y=f(x)+w$, where $w \sim \mathcal{N}(0,\rho_f^2\bm I_2)$, $\rho_f=0.01$, by simulating the system with several initial states chosen randomly using uniform distribution. We used squared-exponential kernel  \cite{williams2006gaussian} defined as 
$k_j(x,x')=\rho_{k_j}^2\exp\Big(\sum_{i=1}^2\frac{(x_i-x_i')^2}{-2l_{ji}^2}\Big), j\in\{1,2\}$, where $\rho_{k_1}^2=224.4168$ and $\rho_{k_2}^2=24.5311$ are signal variances and $l_{11}=6.6030$, $l_{12}=327.5503$, $l_{21}=42.1995$, and $l_{22}=6.4648\times 10^6$ are length scales. These parameters are obtained through likelihood maximization using a quasi-Newton method. The inferred mean and variance are represented as in \eqref{eq:mu_gp} and \eqref{eq:rho_gp} with $\overline{\rho}_{\max}=\max\{\overline{\rho}_1,\overline{\rho}_2\}=0.0360$. Figure \ref{fig:learned_model} illustrates the actual and the learned vector fields.
Computing $\|f_j\|_{k_j}$ and $\gamma_j$, $j\in\{1,2\}$, is a hard problem in general. Thus, we employ Monte-Carlo approach to obtain the probability bound on the accuracy of the learned model provided in Lemma \ref{lemma1}. 
For a fix error bound on the distance between actual vector field $f_j(x)$ and inferred mean $\mu_j(x)$ that is $\beta_j\overline\rho_j=0.05$, $j=1,2$ (i.e., $\mathcal{D}=[-0.05, 0.05]^2$), we obtained a probability interval for the probability in \eqref{aaa} as
$\mathbb{P}\Big\{f(x)\in\{\mu(x) + d\mid  d\in\mathcal{D}\}, \forall x\in X\Big\}\in[0.9839, 0.9905]$ with confidence $1-10^{-10}$ using $10^6$ realizations.
Thus, one can choose the lower bound $(1-\epsilon)^2$ as $0.9839$. Note that for a fix error bound, if we increase the number of data samples $N$ used for learning GPs, we get the tighter lower bound on the probability in \eqref{aaa}. Figure \ref{fig:relation} shows the effect of the increase in number of data samples $N$ on the lower bound on the probability (top) and the maximum standard deviation $\overline{\rho}_{\max}$ (bottom).  \\
\begin{figure}[t] 
	\centering
	\includegraphics[scale=0.42]{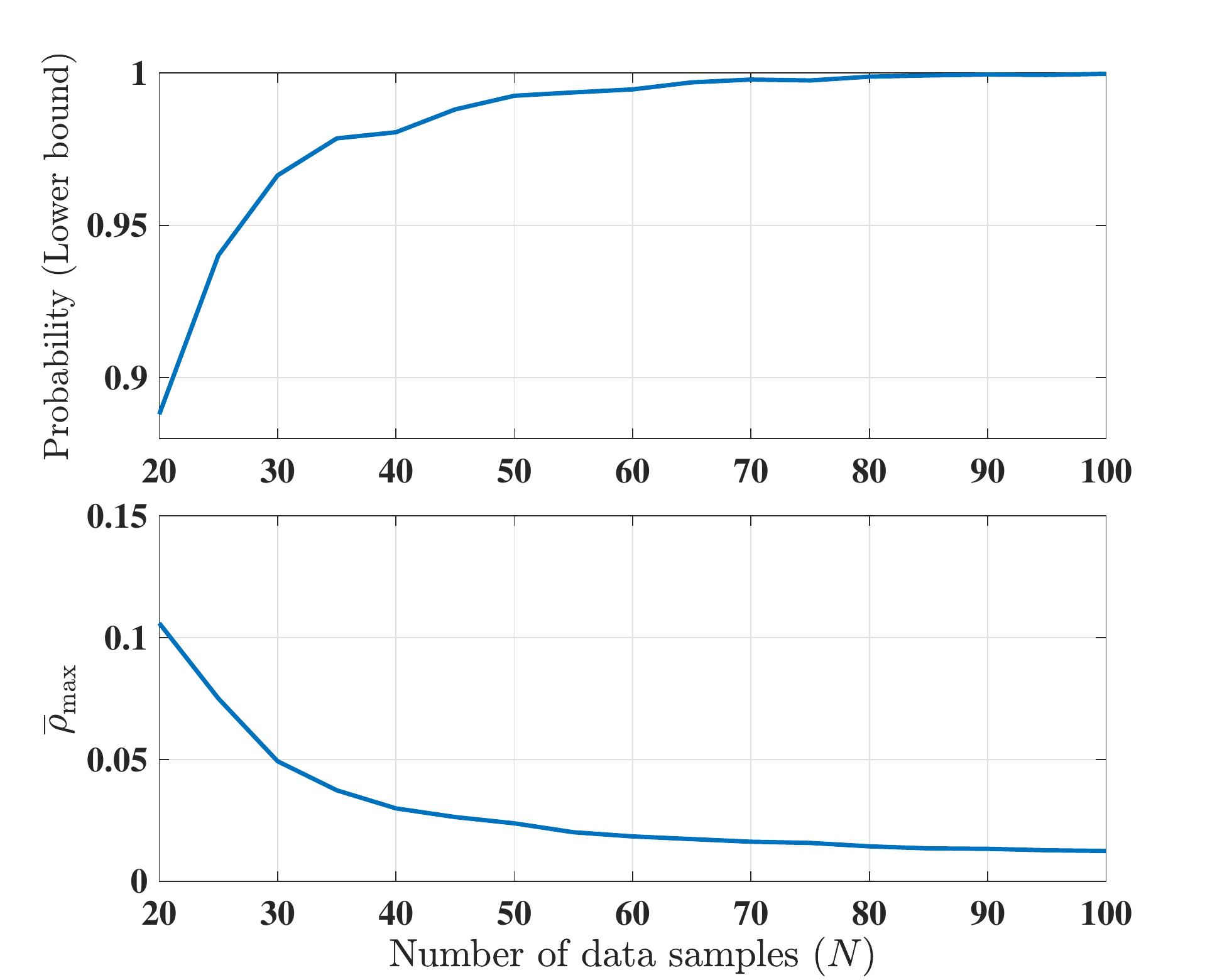} 
	\caption{The change in lower bound on the probability (top) and the maximum standard deviation $\overline{\rho}_{\max}$ (bottom) with increase in the number of data samples $N$.}
	\label{fig:relation}
\end{figure}
Next, we compute a polynomial type control barrier function of order 2  using CEGIS approach discussed in Subsection \ref{cegis} as the following: 
\begin{align*}
B(x)= &- 4292.8910 + 1129.2414x_1+ 1010.3266x_2 \\&+ 1274.3322x_1^2 + 1564.8195x_2^2 - 1368.6064x_1x_2.
\end{align*}
The corresponding controller is given by 
\begin{align}\label{controller123}
\mathbf u(x)\hspace{-.2em}=\hspace{-.2em}\min\{u\in U\mid\frac{\partial B}{\partial x}(x)( \mu(x)\hspace{-.1em}+ \hspace{-.1em}d\hspace{-.1em}+\hspace{-.1em}g(x)u)\leq 0\},
\end{align}
for an arbitrarily chosen $ d \in [-0.05, 0.05]^2$.
Using Theorem \ref{thm1}, one can have a lower bound on the probability that the trajectories of the system starting from any initial state $x_0\in X_0$ under the control policy $\upsilon$ corresponding to controller in \eqref{controller123} satisfy the safety specification as $0.9839$. 
The closed-loop trajectories using controller \eqref{controller123} starting from several initial conditions in $X_0$ are shown in Figure \ref{fig:resp}.  
\begin{figure}[t] 
	\centering
	\includegraphics[scale=0.4]{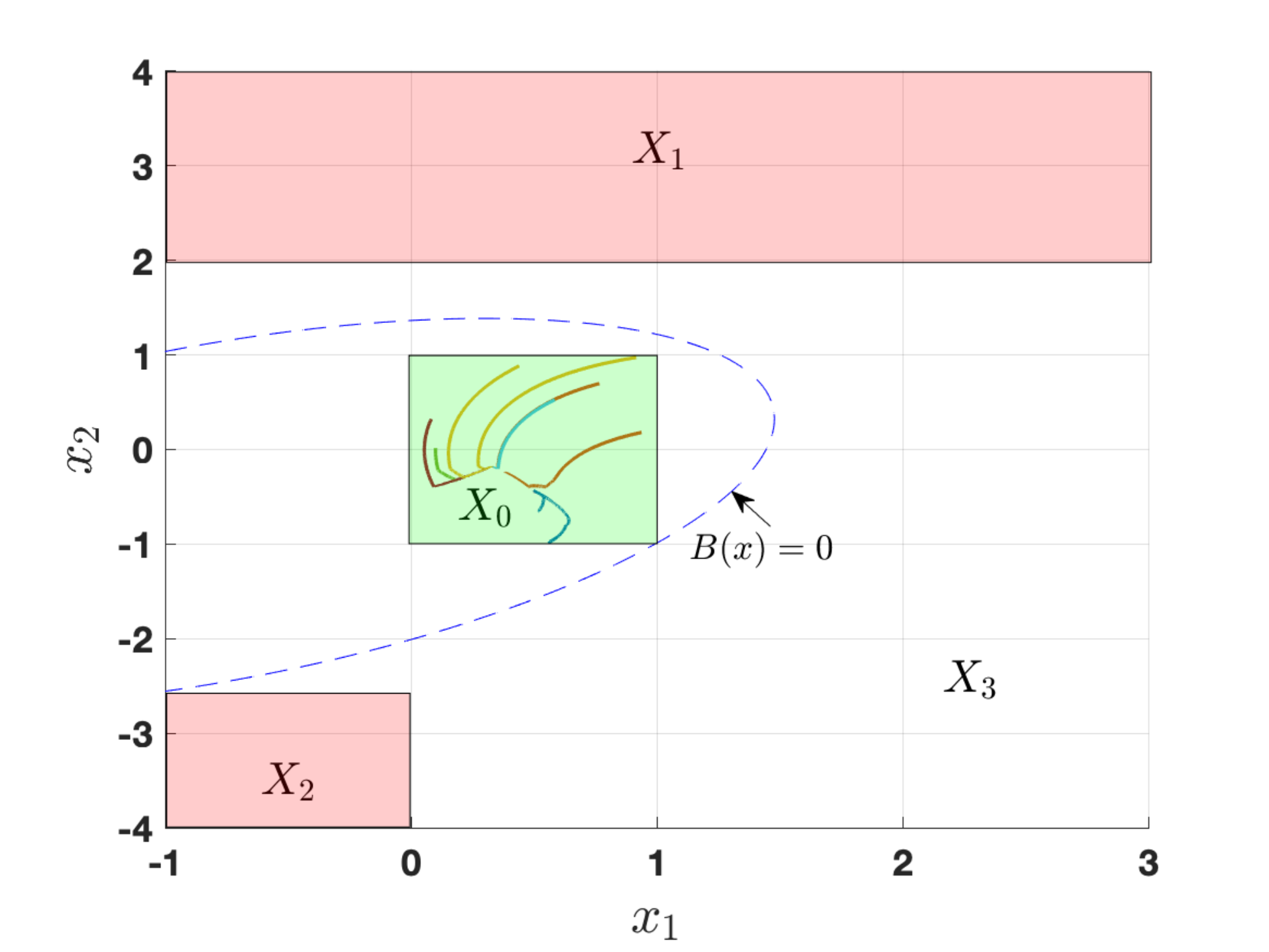} 
	\caption{The solid colored lines are closed-loop trajectories starting from several initial states in $X_0$. The dashed curve shows the zero level set of $B(x)$, defined as $\{x\in X\mid B(x)=0\}$ and the regions of interest are shown using colored rectangles.}
	\label{fig:resp}
\end{figure}
\section{CONCLUSIONS}\label{conclusion}
In this work, we proposed a scheme for synthesizing control policies for unknown control affine systems enforcing safety specifications. 
We provide a systematic technique to solve safety problems by computing control barrier functions utilizing models learned through Gaussian processes with high confidence.


\bibliographystyle{IEEEtran}
\bibliography{bibliography}

\addtolength{\textheight}{-12cm}   

\end{document}